\newcommand{\var}{\mbox{\sf Var}}
\newcommand{\B}{\mbox{\sf B}}
\newcommand{\cls}{\mbox{\sf Cls}}
\newcommand{\php}{\mbox{\sf PHP}}
\newcommand{\nc}{\mbox{\sf NC}}
\newcommand{\pc}{\mbox{\sf PC}}
\newcommand{\fa}{\mbox{\sf false}}
\newcommand{\tr}{\mbox{\sf true}}
\newcommand{\size}{\mbox{\sf size}}
\newtheorem{theorem}{Theorem}[section]
\newtheorem{lemma}[theorem]{Lemma}
\newtheorem{definition}[theorem]{Definition}
\newenvironment{proof}[1][Proof]{\begin{trivlist}
\item[\hskip \labelsep {\bfseries #1}]}{\end{trivlist}}
\title{An Exponential Lower Bound on OBDD Refutations for Pigeonhole Formulas}
\author{Olga Tveretina\thanks{This work
      was supported in part by the ``Concept for the Future'' of  
Karlsruhe Institute of
      Technology within the framework of the German Excellence  
Initiative.}
\institute{Institute for Theoretical Computer Science\\
Karlsruhe University\\
Am Fasanengarten 5, D-76131 Karlsruhe\\ Germany\\}
\email{olga@ira.uka.de}
\and
Carsten Sinz\footnotemark[\value{footnote}]
\institute{Institute for Theoretical Computer Science\\  
Karlsruhe University\\
Am Fasanengarten 5, D-76131 Karlsruhe\\ Germany\\}
\email{\quad sinz@ira.uka.de }
\and
Hans Zantema
\institute{Department of Computer Science\\  TU Eindhoven,
 The Netherlands}
\institute{Institute for Computing and Information Sciences\\
 Radboud University,
The Netherlands}
\email{\quad h.zantema@tue.nl }
}
\begin{document}

\date{ }

\maketitle

\begin{abstract}  Haken proved that every resolution refutation of the pigeonhole formula has at least exponential size.
Groote and Zantema proved that a particular OBDD computation of the pigeonhole formula has an exponential
size. Here we show that any arbitrary OBDD refutation of the pigeonhole formula  has an exponential
size, too: we prove that the size of one of the intermediate OBDDs is  $\Omega(1.025^n)$.

\end{abstract}

\section{Introduction}

The pigeonhole principle, also known as Dirichlet's box  principle states that $n$ holes can hold at most $n$ objects with one object to a hole.
The propositional formulas describing this principle
 were introduced by Cook and  Reckhow in 1979 \cite{CR1979}.
The formula is a CNF parameterized by $n$.  It is unsatisfiable, but after removing any single clause it becomes satisfiable, it is thus minimally unsatisfiable.

The formula  has a very simple shape, a meta argument for unsatisfiability is easily given, but standard techniques for proving
unsatisfiability automatically  run out of time for quite small
values of $n$. 
Therefore, this formula is a good benchmark to   test  the efficiency of an approach for deciding (un)satisfiability.

Also, on the theoretical side, it is the basis of many interesting results.
A landmark result is that of Haken \cite{H1985}, who proved that the length of
any resolution refutation of the pigeon hole formula is at least exponential
in $n$. Surprisingly, Cook proved that it admits a
polynomial refutation based on extended resolution \cite{C1976}.

An   {\it Ordered Binary Decision Diagram} (OBDD), also referred as a reduced OBDD (ROBDD) or just a BDD, is a data structure that is used to represent Boolean functions \cite{B1986,W2000}.

 OBDDs have some interesting properties: they provide compact and canonic representations of Boolean functions, and there are efficient algorithms for performing logical operations on OBDDs.  As a result,
OBDDs have been successfully applied to a wide variety of tasks, particularly
 in VLSI design and CAD  verification \cite{MT1998}. There are some less well-known  applications as  fault tree analysis \cite{SA1996}, Bayesian reasoning and product configuration.

 As a propositional proof system OBDDs  were studied, e.g., by Atserias {\em et al.} \cite{AKV2004}. The authors introduce a very general proof system based on constraint propagation. OBDDs are a special case of this proof system. Their proof system has four rules: {\it axiom}, {\it join}, {\it projection}, and {\it weakening}.  The first two rules, {\it axiom} and {\it join},  correspond to an application of the  OBDD {\it apply} operator. {\it  Projection} and {\it weakening} are introduced to reduce the size of intermediate OBDDs. It was shown that the OBDD proof system containing  all four rules is strictly stronger than resolution \cite{AKV2004} but it is still exponential \cite{K2008}.

In our paper, by the OBDD proof of a formula $\varphi$ we  mean the computation of the corresponding OBDD using the {\it apply-}operation, i.e. in terms of  the above proof system from \cite{AKV2004}, we allow only two rules, namely {\it axiom} and {\it join}.
 If the formula contains $n$ Boolean connectives, then the OBDD construction requires exactly $n$ calls of  ${\it apply}$, and the exponential blow up of the size of the proof is caused by the expansion of the size of the arguments. 

 In
\cite{GZ2003} it was proved that a particular OBDD computation of the pigeonhole formula is at least exponential.   On the other hand,  it was proved in \cite{CZ2009} that the pigeonhole formula admits a polynomial size OBDD refutation in a setting including existential quantification (i.e. including the {\it projection} rule).

In this paper we prove that, based on the notion of OBDD refutation along the
lines of \cite{CZ2009} containing the classical ingredients of OBDD computation,
but excluding existential quantification, we have an exponential lower bound
for the size of OBDD refutations of the pigeonhole formula. This is much
stronger than the result from \cite{GZ2003}: there, the  only  computation
considered first computes the conjunction of all positive clauses, then
the conjunction of all negative clauses, and finally the conjunction of these two. In our setting, the clauses of the pigeonhole formula may be processed in any arbitrary order. We show that in any OBDD refutation proof  some of the intermediate OBDDs has size  at least exponential in $n$. As a consequence we state that the gap between polynomial and exponential in the OBDD refutation framework for pigeonhole formula  is caused by the rule for existential quantification.

We start with preliminaries in Section \ref{prel}. In Section \ref{main_result} we prove an exponential lower bound on OBDD refutations  for the pigeonhole formula. Finally,  Section \ref{conclusion} contains conclusions.

\section{Preliminaries}\label{prel}

We consider propositional formulas in {\it Conjunctive Normal Form} (CNFs). Basic blocks for building CNFs are propositional variables that take the values {\fa} or {\tr}. The set of propositional variables is denoted by $\var$. A literal is either a variable $x$ or its negation $\lnot x$.  A clause is a disjunction of literals, and a CNF is a conjunction of clauses. In the following,  for convenience, we consider clauses as sets of variables, and a CNF as a set of clauses.  By $\cls(\varphi)$  we denote the set of clauses contained in a CNF $\varphi$ and by $\var(\varphi)$ we denote the set of variables contained in the CNF $\varphi$.

\subsection{Ordered Binary Decision Diagrams}

An Ordered Binary Decision Diagram (OBDD) is  a a rooted, directed, acyclic graph, which consists of decision nodes and two terminal nodes 0 and 1.
Each decision node is labeled by a propositional variable from $\var$ and has two child nodes called low child and high child. The edge from a node to a low (high) child represents an assignment of the variable to 0 (1). Such a structure is called {\it ordered} because different variables appear in the same order on all paths from the root. Therefore,  OBDDs assume that  there is a total order $\prec$ on the set of variables $\var$.

 A OBDD is said to be {\it reduced} if the following two rules have been applied to its graph: 1) merge  isomorphic subgraphs; 2) eliminate any node whose two children are isomorphic. In our paper we consider only reduced OBDDs.


Given a propositional formula $\varphi$ and an order on variables $\prec$, we define the size of a OBDD  $\B(\varphi,\prec)$ representing $\varphi$ with respect to  $\prec$ as  the number of its internal nodes and denote it by $\size(\B(\varphi,\prec))$.

We give a definition of a OBDD refutation adapting the definition from \cite{CZ2009}.

\begin{definition}[OBDD refutation]  Given a total  order on variables $\prec$,  a OBDD refutation  of an unsatisfiable CNF $\varphi$ is a sequence of OBDDs $\B_1(\varphi_1,\prec), \dots, \B_n(\varphi_n,\prec)$  such that $\B_n(\varphi_n,\prec)$ is a OBDD representing the constant $\fa$ and for each $\B_i(\varphi_i,\prec)$, $1\leq i\leq n$, exactly one of the following holds.
\begin{itemize}
\item (Axiom) $\B_i(\varphi_i,\prec)$ represents one of the clauses $C\in\varphi$;
\item (Join)  there are OBDDs $\B_{i'}(\varphi_{i'},\prec)$ and $\B_{i''}(\varphi_{i''},\prec)$ such that $1 \leq i'<i''<i$  and $\varphi_i=\varphi_{i'}\wedge \varphi_{i''}$.
\end{itemize}
We say that $n$ is the length of the OBDD refutation. The size of the OBDD refutation is defined as $\sum_{i=1}^n \size(\B_i(\varphi_i,\prec))$.
\end{definition}

When it is convenient, instead of $\B(\varphi,\prec)$ we write $\B(\varphi)$ or just $\B$. If a OBDD $\B$ represents a CNF  $\varphi$ then by $\cls(\B)$ we mean $\cls(\varphi)$  and by  $\var(\B)$   we mean   $\var(\varphi)$.

 The size of the minimal OBDD representing a propositional formula $\varphi$ for a given order on variables $\prec$   is described by the following structure theorem  \cite{SW93,GZ2003}.  We use $\mathbb {B}=\{0,1\}$ to denote the set of Boolean constants.

\begin{theorem}\label{BddSize}  Suppose for a given formula $\varphi$  the following holds:
\begin{itemize}
\item $|\var(\varphi)|=n$;
 \item $\prec$ is a total order on the set of  variables $\var(\varphi)$;
\item $x_1, \dots, x_k$ are the smallest $k$ elements with respect to $\prec$ for some $k<n$;
\item $A\subseteq \{1,\dots,k\}$;
\item $z = (z^1,\dots,z^k) \in \mathbb{B}^k$.
\item For all distinct $\overrightarrow{x}_1, \overrightarrow{x}_2\in \mathbb{B}^k$ such that $x_1^i=x_2^i=z^i$ for all $i\not\in A$ there exists a $\overrightarrow y\in \mathbb {B}^{n-k}$ such that $\varphi(\overrightarrow x_1,\overrightarrow y) \neq\varphi(\overrightarrow x_2,\overrightarrow y)$.
\end{itemize}
Then the size of the OBDD $\B(\varphi,\prec)$ is at least $2^{|A|}$.
\end{theorem}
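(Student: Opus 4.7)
The plan is to use the classical distinguishability argument that underlies all OBDD lower bounds, exploiting canonicity of reduced OBDDs.

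First, I would isolate the set $S = \{\vec{x} \in \mathbb{B}^k : x^i = z^i \text{ for all } i \notin A\}$ of assignments to the first $k$ variables that match $z$ outside the index set $A$. By construction, $|S| = 2^{|A|}$. For each $\vec{x} \in S$, consider the cofactor (restriction) $\varphi_{\vec{x}} : \mathbb{B}^{n-k} \to \mathbb{B}$ obtained by substituting $\vec{x}$ into the first $k$ variables of $\varphi$. The fourth bullet of the hypothesis says precisely that for any two distinct $\vec{x}_1,\vec{x}_2 \in S$ we have some $\vec{y}$ with $\varphi_{\vec{x}_1}(\vec{y}) \neq \varphi_{\vec{x}_2}(\vec{y})$, i.e.\ the $2^{|A|}$ cofactors are pairwise distinct as Boolean functions on the remaining $n-k$ variables.

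Next I would define, for each $\vec{x} \in S$, the node $v_{\vec{x}}$ of $\B(\varphi,\prec)$ reached by starting at the root and, for every decision node whose label is some $x_i$ with $i \leq k$, following the edge corresponding to $x^i$. Since the variable order places $x_1,\dots,x_k$ before all other variables, this process terminates at a node whose sub-OBDD uses only variables from $\{x_{k+1},\dots,x_n\}$ (or at a terminal). The fundamental canonicity property of reduced OBDDs says that the sub-OBDD rooted at $v_{\vec{x}}$ represents exactly the function $\varphi_{\vec{x}}$, and that two nodes in a reduced OBDD represent distinct functions whenever they are distinct. Combining these, the map $\vec{x} \mapsto v_{\vec{x}}$ from $S$ into the node set of $\B(\varphi,\prec)$ is injective, producing at least $2^{|A|}$ distinct nodes.

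The main obstacle is a bookkeeping issue rather than a conceptual one: the paper defines size as the number of \emph{internal} nodes, while the injection above may land on the terminals $0$ and $1$. However, the pairwise distinctness of the cofactors allows at most one $\vec{x} \in S$ to yield the constant $0$ and at most one to yield the constant $1$, so the worst case loses only a small additive constant; this can be absorbed either by choosing $|A|$ large enough that $2^{|A|}$ dominates, or by invoking the reference \cite{SW93,GZ2003} for the tight version of the counting convention. I would write up the proof by formalising the descent defining $v_{\vec{x}}$ (including the case where the OBDD skips a level whose variable is $x_i$ with $i \leq k$, which is harmless because such variables are then irrelevant for all subsequent cofactors) and then citing the canonicity of reduced OBDDs to conclude injectivity.
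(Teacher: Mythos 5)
The paper does not actually prove Theorem~\ref{BddSize}: it is quoted as a known ``structure theorem'' and attributed to \cite{SW93,GZ2003}, so there is no in-paper argument to compare yours against. Your proof is the standard fooling-set argument behind all such OBDD lower bounds, and it is essentially correct: the set $S$ has size $2^{|A|}$, the hypothesis makes the $2^{|A|}$ cofactors pairwise distinct, the descent through the first $k$ levels (with level-skipping handled as you describe) sends each $\vec{x}\in S$ to a node representing $\varphi_{\vec{x}}$, and canonicity of reduced OBDDs makes this map injective. The one point you flag --- that the paper counts only internal nodes while up to two of your distinguished nodes may be the terminals $0$ and $1$ --- is a genuine discrepancy, and your argument as written yields $2^{|A|}-2$ internal nodes rather than $2^{|A|}$. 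This gap is inherited from the theorem statement itself, which is off by a small additive constant in degenerate cases (e.g.\ a formula whose restriction to $z$ outside $A$ collapses to a single variable); the Sieling--Wegener formulation avoids this by counting nodes labeled with a fixed variable $x_{k+1}$ and restricting to subfunctions that essentially depend on it. For the purpose this theorem serves in Section~\ref{main_result} --- an $\Omega(1.025^n)$ bound obtained from $2^{|A|}$ with $|A|=\Theta(n)$ --- the additive loss of two is irrelevant, so your proof fully supports the paper's use of the theorem; you could tighten it to the stated constant by also counting the root (an internal node above level $k$, hence distinct from every $v_{\vec{x}}$) when two of the $v_{\vec{x}}$ are terminals, which recovers $2^{|A|}-1$, but recovering exactly $2^{|A|}$ internal nodes requires the stronger Sieling--Wegener bookkeeping rather than your injection alone.
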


The proof of the lower bound presented in Section \ref{main} is based on Theorem \ref{BddSize}.  However, in order to obtain a lower bound we still have to solve some combinatorial problems. 

\subsection{The pigeonhole formula}

The pigeonhole principle states that $n$ holes can hold at most n objects with one object in a hole. It can be formulated as a set of clauses as follows.

\[ \pc_n = \bigwedge_{i=1}^{n+1} (\bigvee_{j=1}^{n} P_{ij}), ~~ \nc_n  = \bigwedge_{\substack{1\leq i<j \leq n+1 \\ 1\leq k \leq n}} (\lnot P_{ik} \vee \lnot P_{jk}) \]

\[\php_n= \pc_n \wedge \nc_n\]

Now we introduce notations that will be used in the rest of the paper.
Let \[\pc_n^*=\bigwedge_{i=1}^{n} (\bigvee_{j=1}^{n} P_{ij}) \enspace.\]
  Hence, $\pc_n^*$ contains the first $n$ clauses of $\pc_n$.
We represent $\pc_n^*$ as a matrix of variables with $n$ rows and $n$ columns (the clause $\bigvee_{j=1}^{n} P_{ij}$ corresponds to the $i$-th  row). 
We denote this matrix by $P$.
%
For each row in $P$ there is a corresponding clause in $\pc_n^*$ and vice versa, therefore we will refer to a  row  as a  clause,  and to a set of rows as a set of clauses.

For a given total order on variables $\prec$, we define   $S_{\prec}$ as the set  containing the $\lfloor n^2/2\rfloor$ smallest elements of $\var(\pc_n^*)$ with respect to ordering $\prec$, and let $S_{\succeq}=\var(\pc_n^*)\backslash S_{\prec}$.
Moreover, we define 
  \[S_{\prec}^*=\{ P_{ij}\in \var (\php_n) ~|~P_{ij}\preceq \max S_\prec\}, \] 
   and
\[S_{\succeq}^*=\var(\php_n)\backslash S_{\prec}^*.\]

 Note that  $S_{\prec} \cup S_{\succeq}=\var(\pc^*_n)$  and $S_{\prec}^* \cup S_{\succeq}^*= \var(\php_n)$. The sets $S_{\prec}$ and $S_{\succeq}$   are defined in such a way that the  difference between the sizes of these  sets is at most one, but, in contrary, this does not hold for the sets $S_{\prec}^*$  and  $S_{\succeq}^*$.

For each OBDD $\B_i$ in a OBDD    refutation of $\php_n$ we define
 \begin{center} $S_{\prec}^i=S_{\prec}^* \cap \var(\B_i)$ and $S_{\succeq}^i=\var(\B_i)\backslash S_{\preceq}^*.$ \end{center}

\noindent Moreover, we define
\begin{center} $\cls^{neg}(\B_i)=\cls(\B_i)\cap \cls(\nc_n)$ and $\cls^{pos}(\B_i)=\cls(\B_i)\cap \cls(\pc_n).$\end{center}


\section{The main result}\label{main_result}

The proof of our lower bound is inspired by the proof of  a lower bound of a particular OBDD refutation given in \cite{GZ2003}.

\begin{lemma}\label{matrix} 
Consider a matrix $M=\{m_{ij}\}$,   $1\leq i\leq
n$,  $1\leq j\leq n$. Let the matrix entries be colored  equally white and
black, i.e. the difference between the number of white entries and the number
of black entries is at most one. Let $m = \lfloor cn \rfloor$ for 
$c = \frac{1}{2} - \frac{1}{4} \sqrt{2} \approx 0.146$.
Then at least one of the following holds.
\begin{itemize}
 \item One can choose $m$ rows, and in every of these rows a white and a 
black entry, such that all these $2m$ entries are in different columns. 
 \item One can choose $m$ columns, and in every of these columns a white and a 
black entry, such that all these $2m$ entries are in different rows. 
\end{itemize}
\end{lemma}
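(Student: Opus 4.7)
My plan is to argue by contradiction, supposing neither alternative of the lemma holds. Let $r < m$ be the maximum size of a ``row system'' --- $r$ rows $R$ together with a choice of one white and one black entry per row, all $2r$ entries in distinct columns --- and let $U_W, U_B$ denote the chosen columns, $U = U_W \cup U_B$, $|U| = 2r$. Analogously, let $s < m$ be the maximum size of a column system, realised by columns $R'$ and rows $V$ with $|V| = 2s$.

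The first step exploits the maximality of $R$: any row $i \notin R$ must be monochromatic, or satisfy $W_i \subseteq U$ (put such $i$ in $X$), or $B_i \subseteq U$ (put $i$ in $Y$); otherwise one could extend $R$ by adjoining $i$ with two columns outside $U$. Consequently the submatrix $S_1$ on rows $\{1,\dots,n\}\setminus R$ and columns $\{1,\dots,n\}\setminus U$ is row-monochromatic, and dually $S_2$ on rows $\{1,\dots,n\}\setminus V$ and columns $\{1,\dots,n\}\setminus R'$ is column-monochromatic. In $S_1 \cap S_2$ each cell's colour must simultaneously match its row's colour and its column's colour, so $S_1 \cap S_2$ is entirely one colour, which --- after exchanging white and black if needed --- we may take to be black. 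This forces every white row of $S_1$ (the all-white rows together with $Y$) into $V$ and, symmetrically, every white column of $S_2$ into $U$.

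The decisive step is then a cell count on the large submatrix $T = (\{1,\dots,n\}\setminus V) \times (\{1,\dots,n\}\setminus U)$, of size at least $(n-2m)^2$. A case analysis on where each cell's row and column sit (inside or outside $R$ and $R'$), combined with the preceding structural facts, shows that every cell of $T$ is black except possibly those in the exceptional sub-block $(R\setminus V) \times (R'\setminus U)$, which has size at most $rs < m^2$. Hence the number of black cells in the matrix is at least $(n-2m)^2 - m^2$, which together with the balance bound $|\text{black cells}| \le \lceil n^2/2 \rceil$ and the key identity $(1 - 2c)^2 = \tfrac12$ satisfied by $c = (2-\sqrt 2)/4$, should produce the contradiction. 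The main obstacle lies precisely here: running the count naively only yields the weaker $c \approx 0.140$ (the root of $6c^2 - 8c + 1 = 0$), so matching the claimed $(2-\sqrt 2)/4 \approx 0.146$ requires squeezing the $m^2$ slack from the exceptional sub-block --- most plausibly by also using the symmetric bound on \emph{white} cells, or by a short augmenting/swap argument showing the maximum systems may be chosen with $R \subseteq V$ (and $R' \subseteq U$) so that the exceptional block degenerates.
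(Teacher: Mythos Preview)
Your argument is structurally sound but, as you yourself flag, it does not reach the constant $c=(2-\sqrt2)/4$ stated in the lemma: the count $(n-2m)^2-m^2>n^2/2$ only gives $c=(4-\sqrt{10})/6\approx 0.140$, and the ``augmenting/swap'' repair you suggest is not actually carried out. So as a proof of the lemma \emph{as stated} there is a genuine gap. (For the downstream application any positive $c$ would do, but that is a separate point.)

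The paper sidesteps this difficulty by abandoning the symmetry you insist on. It runs only the greedy \emph{row} process: repeatedly pick a row with both colours, select one white and one black entry in it, and delete that row and those two columns. If this lasts $m$ steps, the first alternative holds. Otherwise it halts after $k<m$ steps, leaving an $(n-k)\times(n-2k)$ submatrix in which every row is monochromatic. Now a single count does all the work: if at least $n-2m$ of these rows were black, the number of black entries would be at least $(n-2m)(n-2k)>(n-2m)^2\ge n^2/2$, contradicting balance. Hence fewer than $n-2m$ rows are black, so more than $(n-k)-(n-2m)=2m-k>m$ are white; by the symmetric count, more than $m$ are black as well. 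With more than $m$ all-white rows, more than $m$ all-black rows, and $n-2k>m$ columns, one builds the column system directly: pick any $m$ columns, and in the $j$th column take one entry from a fresh white row and one from a fresh black row. There is no second maximality hypothesis, no exceptional $rs$ block, and the identity $(1-2c)^2=\tfrac12$ is invoked exactly once, which is why the argument lands precisely on $c=(2-\sqrt2)/4$.
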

\begin{proof}
Starting by the given matrix repeat the following process as long as possible. 
\begin{quote}
Choose a row in the matrix containing both a white and a black entry.
Remove both the column containing the white entry and the column containing 
the black entry. Also remove the chosen row.
\end{quote}
Assume this repetition stops after $k$ steps. If $k \geq m$ the first
property of the lemma holds and we are done. In the remaining case the
remaining matrix consists of $n-k$ rows with  $n-2k$ entries in each row, where every row
either only consists of white entries or only of black entries.
Assume that at least $n-2m$ of these rows are totally black. Using $k<m$
we conclude that the number of black entries in this remaining matrix is at least 
\[ (n-2m)(n-2k) > (n-2m)^2 \geq \frac{1}{2} n^2,\]
contradicting the assumption that at most half of the entries are black
(possibly up to one). So at least $n-k-(n-2m) = 2m-k > m$ of these rows are
totally white. By symmetry also at least $m$ of these rows are totally white. 
As the length of these rows are $n-k > n-m > m$, the second property of the lemma
is easily fulfilled. 
\end{proof}

By fine-tuning the argument the constant $c$ in Lemma \ref{matrix} can be
improved. We conjecture that it also holds for $c = 1 - \frac{1}{2}
\sqrt{2} \approx 0.293$. Choosing the $n \times n$ matrix in which the left
upper $k \times k$-square is black for $k \approx \frac{n}{\sqrt{2}}$ and the
rest is white, one observes that this value will be sharp. As our main result
involves an exponential lower bound, we do not focus on the precise optimal
value of $c$.

The pigeonhole formula is an unsatisfiable CNF and, hence, the OBDD representing $\php_n$ is just a terminal node $0$. Therefore, we have to show that for an arbitrary  order on variables and an arbitrary way to combine clauses there is an intermediate OBDD of a size exponential in $n$. 
We start our proof by the simple observations describing some properties of intermediate OBDDs.  And  the following lemma generalizes a well-known fact about binary trees claiming the existence of subtrees with
a weight lying between a and 2a (for any definition of ``weight'' as a sum of the weights of its leaves).

\begin{lemma}\label{general}
  Let $C$ be a finite set, $R \subseteq C$ with $|R| \geq 2$, and $B_1,\dots,B_l \subseteq C$
  a sequence with:
  \begin{enumerate}
  \item $B_l = C$
  \item For each $B_i$ ($1 \leq i \leq l$),
    either $B_i = \emptyset$, $B_i = \{ c \}$ for $c \in C$, or $B_i = B_j \cup B_k$ for some $j, k$ with $j < k < i$.
  \end{enumerate}
  Then, for each $a$ with $\frac{1}{|R|} < a \leq \frac{1}{2}$, there is a $j < l$ such that
  \[ a |R| \leq |B_j \cap R| < 2 a |R| \enspace .\]
\end{lemma}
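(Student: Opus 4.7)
The plan is to perform a descent through the sequence $B_1, \ldots, B_l$ that tracks the weight $w(B_i) := |B_i \cap R|$. Starting at index $l$, at each stage I would split the current set using property (2), move to whichever of the two constituents carries more weight, and stop as soon as the weight drops below $2a|R|$.

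First I would record two easy observations. Since $a \leq \tfrac{1}{2}$, the starting weight $w(B_l) = |R|$ already satisfies $w(B_l) \geq 2a|R|$, so the descent actually starts. Since $a > \tfrac{1}{|R|}$ forces $2a|R| > 2$, every $B_i$ with $w(B_i) \geq 2a|R|$ satisfies $|B_i| \geq 3$, hence is neither empty nor a singleton; property (2) then supplies indices $j < k < i$ with $B_i = B_j \cup B_k$.

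For the iteration step, given $i_t$ with $w(B_{i_t}) \geq 2a|R|$, I decompose $B_{i_t} = B_j \cup B_k$ and set $i_{t+1}$ to be whichever of $j, k$ gives the larger weight. From $(B_j \cap R) \cup (B_k \cap R) = B_{i_t} \cap R$ I get $w(B_j) + w(B_k) \geq w(B_{i_t})$ and hence
\[ w(B_{i_{t+1}}) \geq \tfrac{1}{2} w(B_{i_t}) \geq a|R|, \]
so the lower bound $w \geq a|R|$ is automatically preserved along the descent. If at this point $w(B_{i_{t+1}}) < 2a|R|$ I stop and return the index $i_{t+1}$; otherwise I repeat with $i_{t+1}$ in place of $i_t$.

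Termination follows from the strict decrease $i_0 > i_1 > \cdots$: since the indices are bounded below by $1$ the descent cannot continue forever, and by the first observation it cannot stall at an undecomposable $B_{i_t}$ of weight $\geq 2a|R|$. Since $w(B_l) \geq 2a|R|$, the descent takes at least one step, so the returned index is strictly less than $l$. The main (and really the only) subtlety is to notice that the hypothesis $a > \tfrac{1}{|R|}$ is used precisely to push the threshold $2a|R|$ strictly above $2$, guaranteeing decomposability at every stage; once that is identified, the bookkeeping with indices and the halving inequality are routine.
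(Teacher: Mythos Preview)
Your proof is correct and follows essentially the same approach as the paper: both arguments descend through the union decomposition tracking the weight $|B_i \cap R|$, use $a > 1/|R|$ to guarantee that any set of weight $\geq 2a|R|$ is decomposable, and use the subadditivity of weight under union to control the drop at each step. The only cosmetic difference is that the paper phrases the argument by contradiction (assuming the interval $[a|R|,\,2a|R|)$ is never hit and deriving a contradiction at the first node whose children both fall below $a|R|$), whereas you give the same idea directly by always passing to the heavier child.
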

\begin{proof}
We give a proof by contradiction. Suppose, for each $B_j$, either
 $$|B_j \cap R| < a |R| \qquad \textrm{or} \qquad |B_j \cap R| \geq 2 a |R| \enspace.$$

As $B_l \cap R = C \cap R = R$, the inequality $|B_l \cap R| \geq 2 a |R|$ holds for the final element
$B_l$ of the sequence. On the other hand, for singletons $B_j = \{ c \}$, we have
$|B_j \cap R| = 0 < a |R|$ for $c \notin R$, and $|B_j \cap R| = 1 < a |R|$ for $c \in R$, as $a > 1/|R|$.
Moreover, for $B_i = \emptyset$, $|B_i \cap R|  < a |R|$ obviously holds.
Following now the predecessors of $B_l$ (via the construction by set union)
in the sequence $B_i$ backwards, we finally arrive at an index $k$
for which the following holds:
\begin{itemize}
  \item $|B_k \cap R| \geq 2 a |R|$, and
  \item $B_k = B_{k'} \cup B_{k''}$, where $|B_{k'} \cap R| < a |R|$ and $|B_{k''} \cap R| < a |R|$.
\end{itemize}
As $B_k \cap R = (B_{k'} \cup B_{k''}) \cap R = 
(B_{k'} \cap R) \cup (B_{k''}  \cap R)$, and thus $|B_k \cap R| \leq |B_{k'} \cap R| + 
  |B_{k''} \cap R| < 2 a |R|$,  we arrive at a contradiction to $|B_k \cap R| \geq 2 a |R|$.
\end{proof}

\begin{lemma}\label{rows}

Suppose   $\B_1, \dots, \B_l$ is  a BDD refutation of  $\php_n$ and  $R\subseteq \cls(\pc_n)$ with
$|R| > 4$.
 Then   there is an $i < l$ such that  \[|R|/4 \leq  |\cls(B_i) \cap R| < 2 |R|/4 \enspace .\]

\end{lemma}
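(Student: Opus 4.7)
The plan is to reduce the claim to a direct application of Lemma \ref{general}. I would take $C = \cls(\pc_n)$, keep $R$ as the distinguished subset, choose the parameter $a = 1/4$, and define the sequence $B_i = \cls(\B_i) \cap \cls(\pc_n)$ --- that is, track only the positive clauses that have been merged into the $i$-th intermediate OBDD. Because $|R| > 4$, the admissibility condition $1/|R| < a \leq 1/2$ holds, and because $R \subseteq \cls(\pc_n)$ we have $B_j \cap R = \cls(\B_j) \cap R$ for every index, so the conclusion $a|R| \leq |B_j \cap R| < 2a|R|$ of Lemma \ref{general} translates immediately into the required $|R|/4 \leq |\cls(\B_j) \cap R| < |R|/2$.

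Two hypotheses of Lemma \ref{general} still need to be verified. The recursion on the sequence is straightforward: for an axiom step producing $\B_i$ from a single clause $C'$, we have $B_i = \{C'\}$ if $C' \in \cls(\pc_n)$ and $B_i = \emptyset$ otherwise; for a join step $\varphi_i = \varphi_{i'} \wedge \varphi_{i''}$, we have $\cls(\B_i) = \cls(\B_{i'}) \cup \cls(\B_{i''})$, and since intersecting with $\cls(\pc_n)$ commutes with union, $B_i = B_{i'} \cup B_{i''}$. These are exactly the three forms allowed by Lemma \ref{general}.

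The only step that is not a direct translation, and hence the main obstacle, is the boundary condition $B_l = C$, i.e. that every clause of $\pc_n$ actually occurs as an axiom somewhere in the refutation. This is where a property specific to the pigeonhole formula enters. Since $\B_l$ represents $\fa$, the conjunction of all clauses used as axioms must be unsatisfiable; but $\php_n$ is minimally unsatisfiable, because removing any positive clause $\bigvee_{j=1}^{n} P_{ij}$ leaves the remainder satisfiable (place the other $n$ pigeons injectively into the $n$ holes and leave pigeon $i$ unassigned). Hence every clause of $\pc_n$ must appear among the axioms, giving $B_l = \cls(\pc_n) = C$, and Lemma \ref{general} with $a = 1/4$ delivers the desired index $i < l$. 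Everything else is bookkeeping.
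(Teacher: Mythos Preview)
Your proposal is correct and takes essentially the same approach as the paper, which simply says ``Follows from Lemma~\ref{general}.'' You supply the instantiation of Lemma~\ref{general} (with $C=\cls(\pc_n)$, $a=1/4$, and the sequence $B_i=\cls(\B_i)\cap\cls(\pc_n)$) and, more importantly, you make explicit the minimal-unsatisfiability argument needed for the boundary condition $B_l=C$, which the paper leaves implicit.
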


\begin{proof} Follows from Lemma \ref{general}.
\end{proof}

\bigskip

Let $\B_1, \dots, \B_l$ is  a BDD refutation of  $\php_n$. For each $i \leq l$ define $J_i$ as the set of
columns from $P^c$ as follows:
 \[J_i = \{ j\in \{1,\dots,n\} ~| ~\exists a,b: \lnot P_{aj}\vee \lnot P_{bj}\in \cls(B_i), ~ P_{aj}\in S_\prec, ~\text{and}~ P_{bj}\in S_{\succeq}\}.\]

\begin{lemma}\label{columns}

 Suppose  $\B_1, \dots, \B_l$ is  a BDD refutation of  $\php_n$ for a total order on variables $\prec$, and  $P'\subseteq \{1,\dots,n\}$ with $|P'| > 4$. Then  there is  an $i < l$  such that

\[ |P'|/4  \leq  |J_i \cap P'|  < |P'|/2.\]
 \end{lemma}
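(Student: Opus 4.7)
The plan is to reduce this to Lemma~\ref{general} in exactly the same spirit as Lemma~\ref{rows}, but now applied to the column-indicator sets $J_i$ rather than to the clause sets $\cls(\B_i)$ directly. The key observation is that the recursive structure of the refutation $\B_1, \dots, \B_l$ is inherited by the sequence $J_1, \dots, J_l$, so that Lemma~\ref{general} can be invoked almost verbatim.

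First I would set up the universe by taking $C = J_l$. Since $\php_n$ is minimally unsatisfiable and $\B_l$ represents the constant $\fa$, the clauses occurring in $\B_l$ must exhaust all of $\cls(\php_n)$, so $J_l$ is the full set of split columns and $B_l := J_l = C$. Next I would verify the structural condition of Lemma~\ref{general} inductively on $i$. If $\B_i$ is an axiom for a clause $c$, then either $c$ is a positive clause or a non-split negative clause, in which case $J_i = \emptyset$, or $c = \lnot P_{aj} \vee \lnot P_{bj}$ is a split negative clause, in which case $J_i = \{j\}$ is a singleton of $C$. If $\B_i$ arises from a join $\varphi_i = \varphi_{i'} \wedge \varphi_{i''}$, then $\cls(\B_i) = \cls(\B_{i'}) \cup \cls(\B_{i''})$, and therefore $J_i = J_{i'} \cup J_{i''}$ with $i' < i'' < i$, which is exactly the union-of-predecessors form Lemma~\ref{general} requires.

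With these hypotheses in place, I would invoke Lemma~\ref{general} with $R = P'$ and $a = 1/4$. The assumption $|P'| > 4$ is precisely what gives $1/|R| < a \leq 1/2$, and the conclusion of Lemma~\ref{general} produces an index $i < l$ with $|P'|/4 \leq |J_i \cap P'| < 2 \cdot |P'|/4 = |P'|/2$, as required. The only non-routine point---and hence the main obstacle---is the verification that joins in the refutation really preserve clause-set unions so that $J_i = J_{i'} \cup J_{i''}$; this is essentially immediate from the join rule of the refutation system, but is what makes the $J_i$-sequence fit the template of Lemma~\ref{general}. (Implicitly one also needs $P' \subseteq J_l$, which will be guaranteed by the way this lemma is applied later on.)
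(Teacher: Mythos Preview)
Your approach is essentially the same as the paper's: both invoke Lemma~\ref{general} with $R = P'$, $a = 1/4$, and the sequence $J_1,\dots,J_l$, after checking that axioms yield $J_i$ empty or a singleton and that joins yield $J_i = J_{i'} \cup J_{i''}$. The only cosmetic difference is that the paper takes $C = \{1,\dots,n\}$ whereas you take $C = J_l$; your choice is actually the cleaner one, and your parenthetical remark that one implicitly needs $P' \subseteq J_l$ is exactly the point that makes either choice work in the intended application.
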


\begin{proof}
Follows from Lemma~\ref{general}, using $C = \{1,\dots,n\}$, $R = P'$, $a = 1/4$, and $J_1,\dots,J_l$ for the sequence
$(B_i)_{1 \leq i \leq l}$, for which the precondition of Lemma~\ref{general} holds, as is easily checked.
\end{proof}


\begin{theorem}\label{main}
For every order $\prec$  on the set of variables, the size  of each OBDD refutation  of $\php_n$ is
$\Omega(1.025^n)$.
\end{theorem}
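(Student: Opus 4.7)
The plan is to show that in any OBDD refutation $\B_1,\ldots,\B_l$ of $\php_n$, some intermediate $\B_i$ already has size $\Omega(1.025^n)$. The bound is obtained via Theorem \ref{BddSize}: I will construct a set $A$ of $\Omega(n)$ positions among the bottom-half variables of $\varphi_i$, together with a default assignment $z$, such that any two distinct bottom-half assignments agreeing with $z$ outside $A$ induce different residuals on the top half. This yields $\size(\B_i) \geq 2^{|A|}$, and the eventual constant $\log_2 1.025 \approx 0.036$ comes from $c \approx 0.146$ in Lemma \ref{matrix} diluted by the factor $1/4$ from Lemma \ref{rows}.

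First, I would colour the $n\times n$ matrix $P$ of positive-side variables white on $S_\prec$ and black on $S_\succeq$; by construction the coloring is balanced. Applying Lemma \ref{matrix} gives (up to the symmetric column-case) a set $R_0$ of $m = \lfloor cn\rfloor$ rows each equipped with a white cell $P_{r,c_r^W}$ and a black cell $P_{r,c_r^B}$, with the $2m$ column indices $\{c_r^W,c_r^B : r\in R_0\}$ pairwise disjoint. I then invoke Lemma \ref{rows} with $R$ equal to the row clauses indexed by $R_0$; this produces an intermediate $\B_i$ whose axioms contain a subset $R^\ast\subseteq R_0$ of size at least $m/4$. I set $A := \{c_r^W : r\in R^\ast\}$, which lies entirely in the bottom half since each cell is white, and let $z$ be the all-zero assignment on the remaining bottom positions.

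To verify the distinguishability hypothesis of Theorem \ref{BddSize}, take any two distinct bottom assignments $\vec{x}_1,\vec{x}_2$ agreeing with $z$ outside $A$, pick a row $r\in R^\ast$ where they disagree, and construct a top-half extension $\vec{y}$ that (i) keeps every top-half variable of row $r$ at $0$, (ii) sets $P_{r',c_{r'}^B}=1$ for every other $r'\in R^\ast$ so those selected row clauses are satisfied through the top, and (iii) satisfies the remaining clauses of $\varphi_i$ using top-half variables outside column $c_r^B$. Then $(\vec{x}_1,\vec{y})$ and $(\vec{x}_2,\vec{y})$ agree on every clause of $\varphi_i$ save the row-$r$ clause, which is satisfied under $\vec{x}_1$ but falsified under $\vec{x}_2$, so $\varphi_i(\vec{x}_1,\vec{y})\neq\varphi_i(\vec{x}_2,\vec{y})$. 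Theorem \ref{BddSize} then yields $\size(\B_i)\geq 2^{|A|} = \Omega(1.025^n)$.

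The main obstacle is step (iii). Some positive clauses of $\varphi_i$ may correspond to rows outside $R_0$ that happen to sit entirely in the bottom half of $\prec$, and some negative clauses of $\varphi_i$ may collide with the black cells we want to turn on. These difficulties are precisely where the column case of Lemma \ref{matrix} (triggered when the bichromatic rows are too few) together with the split-column characterisation of Lemma \ref{columns} come in: they enable a symmetric construction realising the distinguishing signal through negative clauses rather than positive ones. The detailed case analysis for $\vec{y}$, together with the pigeonhole choices needed to keep $|A|$ at $\Omega(n)$ throughout, is the technical heart of the argument.
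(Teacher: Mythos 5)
Your skeleton is the paper's: colour $P$ by the $S_\prec/S_{\succeq}$ split, extract $m=\lfloor cn\rfloor$ bichromatic rows (or columns) via Lemma \ref{matrix}, locate an intermediate $\B_i$ owning a constant fraction of them via Lemma \ref{rows}, and feed the white cells in as the set $A$ of Theorem \ref{BddSize}. But the step you defer --- your (iii), satisfying every clause of $\varphi_i$ other than the designated row $r$ --- is where the whole proof lives, and the repair you point to is the wrong one. The column case of Lemma \ref{matrix} and Lemma \ref{columns} are not a fix for the row case: they constitute the \emph{other} branch of the dichotomy, invoked only when Lemma \ref{matrix} fails to deliver $m$ bichromatic rows at all, and in that branch the distinguishing signal is carried by split \emph{negative} clauses, which is a genuinely different construction (and one in which the positive clauses still all have to be satisfied, so the same difficulty recurs there). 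Neither helps with a positive clause of $\varphi_i$ lying outside $R_0$.

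Two concrete holes in the row case as you state it. First, a positive clause of $\varphi_i$ not in $R^\ast$ all of whose variables lie in the bottom half cannot be satisfied by any choice of $\overrightarrow{y}$; with your all-zero default $z$ both $(\overrightarrow{x}_1,\overrightarrow{y})$ and $(\overrightarrow{x}_2,\overrightarrow{y})$ then falsify $\varphi_i$ and distinguishability fails. The paper's remedy is to build the satisfying literal into the \emph{default} $z$ itself (the set $X_\prec$, on which $z_j=1$), not into $\overrightarrow{y}$. Second, to give each leftover positive clause a private satisfying variable in a column untouched by the $2|R^\ast|$ selected white/black cells --- needed so that no negative clause of $\varphi_i$ sees two 1's in one column --- you must know that the number of leftover positive clauses is at most the number $n-2|R^\ast|$ of free columns. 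That follows only from the \emph{upper} bound $|\cls(\B_i)\cap R_0|<|R_0|/2$ of Lemma \ref{rows}: the rows of $R_0\setminus R^\ast$ are absent from $\varphi_i$, so $|\cls^{pos}(\B_i)|\le (n+1)-(|R_0|-|R^\ast|)\le n-|R^\ast|$ once $|R_0|\ge 2|R^\ast|+1$. You quote only the lower bound $|R^\ast|\ge m/4$, which does not suffice (if $\varphi_i$ contained all $n+1$ positive clauses the column budget would be exceeded). Closing these two points, and then writing out the column branch with its own satisfaction argument, is not routine bookkeeping --- it is the theorem.
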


\begin{proof}
Let $n>34$, and $\B_1, \dots, \B_l$ be a OBDD refutation  of $\php_n$.
We prove that for an arbitrary total order on variables $\prec$ there is an $i\leq l$ such that $\size (\B_i) \geq 2^{n(\frac{1}{2} - \frac{1}{4} \sqrt{2})/4}$.
 Since $2^{(\frac{1}{2} - \frac{1}{4} \sqrt{2})/4 }>1.025$ we have
$\size (\B_i) >1.025^n$ and the theorem holds.

We apply Lemma \ref{matrix} to the matrix representing $\pc_n^*$.  Then  one of the following holds.

\begin{itemize}
 \item There is a set  of $\lfloor n(\frac{1}{2} - \frac{1}{4} \sqrt{2})\rfloor $ rows (we denote this set by $R$) and there is a set of $2\lfloor n(\frac{1}{2} - \frac{1}{4} \sqrt{2})\rfloor $  entries (we denote this set  by $S^R$)  such that the following holds:
\begin{itemize}

\item  For each $r\in R$ there are $P_{ra}, P_{rb}\in S^R$ such that $P_{ra}\in S_{\prec}$ and $P_{rb}\in S_{\succeq}$.

\item  For distinct $P_{ab}, P_{cd}\in S^R$, $b\neq d$. \end{itemize}
We define  \[R^i= \cls (B_i) \cap R \enspace.\]

As $n>34$, $|R| = \lfloor n(\frac{1}{2} - \frac{1}{4} \sqrt{2})\rfloor \geq 5$, and we can apply
Lemma \ref{rows}. Thus we know that there is an $i< l$ such that
\[ |R|/4 \leq |R^i|< 2 |R|/4.\]

We get \[ 2|R^i|+1 \leq |R|.\]

 For each row $r\in R^i$ we fix an entry that is in the set $S_{\prec}$. We collect these elements in the set $A$. For each row $r\in R^i$ we also fix  an entry that is in $S_{\succeq}$ and collect these elements in the set $Y$. Let

\[R^j=\{j ~|~ \exists i: P_{ij}\in A \cup Y\}.\]


 Taking into account that $ 2|R^i|+1 \leq |R|$ we compute

\[|\cls^{pos}(B_i)|\leq (n+1) - (|R| -|R^i|) \leq (n+1) - ((2|R^i|+1) - |R^i|)= n- |R^i|.\]

We denote $\overline{R^i}=\cls^{pos}(B_i)\backslash R^i$. By definition $R^i\subseteq \cls^{pos}(B_i)$. Hence,  we obtain
\[|\overline{R^i}|=|\cls^{pos}(B_i)|-|R^i|\leq n- 2|R^i|.\]

Let  $J=n- |R^j|$. Since we have chosen the set of rows $R^i$ as satisfying the conditions of Lemma \ref{matrix}, we get $|R^j|=2|R^i|$ and

\[J = n - 2|R^i|\] and \[|\overline{R^i}|\leq |J|.\]

For each $C\in \overline{R^i}$ we fix one variable and collect these variables in the set $X$ that the following holds. For distinct $P_{ab},P_{cd}\in X$, $b\neq d$. This is possible because
$|\overline{R^i}|\leq |J|$.

We define $X_{\prec}=S_{\prec}^*\cap X$ and $X_{\succeq}=S_{\succeq}^*\cap X$.

We apply Lemma \ref{BddSize} on \[k=|S_{\prec}^i|.\]

For $j=1, \dots, k$ we define $z_j=1$ if $z_j\in A$ or $z_j\in X_{\prec}$, otherwise  we define  $z_j=0$.

Choose $\overrightarrow x, \overrightarrow x'$ satisfying $\overrightarrow x\neq \overrightarrow x'$ and $ x_j=x'_j=z_j$ for all $z_j\not\in A$. Then there is $j'$ such that $x_{j'}\neq x'_{j'}$.

Let $\overrightarrow y=(y_{k+1}, \dots, y_q)$, where $q=|\var(B_i)|$, be the vector defined  by $y_j=1$ if $y_j\in X_{\succeq}$ and  $y_j=0$ for all  $y_j\in S_{\succeq}^i\backslash (Y\cup X_{\succeq})$.  If $y_j\in Y$ then we choose $y_j=0$ if it is in the same row as $x_i$ and $y_j=1$ otherwise.

Hence, the subset of clauses represented by $\B_i$ evaluates to $x_{j'}$ for the assignment $ (\overrightarrow x, \overrightarrow y)$ and to $x'_{j'}$ for the assignment  $ (\overrightarrow x', \overrightarrow y)$.


The size of the set $A$ is at least $n ( \frac{1}{2} - \frac{1}{4} \sqrt{2} )/4$ by construction.  Hence, by Lemma \ref{BddSize}, we conclude that $\size (B_i)\geq 2^{|A|}\geq 2^{|R|/4}\geq 2^{n(\frac{1}{2} - \frac{1}{4} \sqrt{2})/4}$
for sufficiently large $n$.



\item There is a set of  $\lfloor n(\frac{1}{2} - \frac{1}{4} \sqrt{2})\rfloor $ columns (we denote this set by  $Q$) and there is a set  containing $2\lfloor n(\frac{1}{2} - \frac{1}{4} \sqrt{2})\rfloor$ entries (we denote this set by $S^Q$) such that the following holds:
 \begin{itemize} 
\item For each $q\in Q$ there are $P_{aq},P_{bq}\in S^Q$ such that $P_{aq}\in S_{\prec}$ and  $P_{bq}\in S_{\succeq}$. 
\item  For distinct $P_{ab},P_{cd}\in S^Q$, $a\neq c$.\end{itemize}
Suppose  $m=\lfloor n( \frac{1}{2} - \frac{1}{4} \sqrt{2})\rfloor$.

Let \[Q^c=\{ j~| ~\exists a,b: \lnot P_{aj}\vee \lnot P_{bj}\in \cls(B_i) ~\&~ P_{aj}\in S_{\prec} ~\&~ P_{bj}\in S_{\succeq}\}.\]

Then, by Lemma \ref{columns}, there is  $\B_i$ for $i<l$ such that
\[ m/4  \leq |Q^c| < m/2.\]
For each  $j\in Q^c$ we choose $\lnot P_{aj}\vee\lnot P_{bj}$ such that $\lnot P_{aj}\vee\lnot P_{bj}\in \cls(\B_i)$, where   $P_{aj}\in S_{\prec}$ and $P_{bj}\in S_{\succeq}$. We collect  $P_{aj}$ in $A$ and $P_{bj}$ in $Y$.

Let \[Q^r=\{a ~|~ \exists j: P_{aj}\in A \cup Y\}.\]

Let \[\overline {Q^c}=Q\backslash Q^c.\]
Then \[\overline {Q^c}>m/2.\]

For each $j\in \overline {Q^c}$ we fix $P_{a_{j}j}, P_{b_jj}\in S^Q$, where $P_{a_jj}\in S_{\prec}^*$ and $P_{b_jj}\in S_{\succeq}^*$. We collect $P_{a_{j}j}$ in $X_{\prec}$ and we collect   $P_{b_jj}$ in $X_{\succeq}$ for all $j\in \overline {Q^c}$.

We define
  \[\overline{Q^r}=\{a ~|~ \exists b: P_{ab}\in X_{\prec} \cup X_{\succeq}\}.\]
By Lemma \ref{matrix} all entries collected in $\overline{Q^r}$ are from different rows. Hence,  we obtain     \[|\overline {Q^r}|=2|\overline {Q^c}|.\]

Taking into account that $\overline {Q^c}>m/2$ we get
 \[\overline {Q^r}>2m/2=m\] and  since $\overline {Q^r}$ is a natural number we get
\[\overline {Q^r}\geq m+1.\]

We denote \[Q^*=\cls^{pos}(B_i)\backslash \overline {Q^r}.\]

The set of clauses  $\cls^{pos}(B_i)$ can contain an arbitrary subset of clauses from $\pc^n$, i.e. \[1 \leq |\cls^{pos}(B_i)|\leq n+1.\]  We take into account that  $|\overline{Q^r}|\geq m+1$ and compute
\[|\cls^{pos}|\leq (n+1)- |\overline{Q^r}|\leq  (n+1)- (m+1)=n-m.\]

We define $J=\{j ~| \exists a: P_{aj}\in \var(\php_n)~\&~j\not\in Q \}$. Then
 \[|J|=n-|Q|=n-m.\]

Therefore, $|Q^*|\leq |J|$.

For each row $r\in Q^*$ we fix one entry and  collect these entries in the set $W$. We require that the entries collected in $X$ satisfy the following properties.
\begin{itemize}
\item $r$ contains at least one entry such that this entry is in one of the columns of $J$;
\item each column is $J$ contains at most one fixed entry.
\end{itemize}
Since $|Q^*|\leq |J|$, there is such a set $W$. We denote $X_{\prec}^i=S_{\prec}^i \cap X_{\prec}$; $X_{\succeq}^i=S_{\succeq}^i \cap X_{\succeq}$;
$W_{\prec}=S_{\prec}^i \cap W$ and  $ W_{\succeq}=S_{\succeq}^i \cap W$.
We apply Lemma \ref{BddSize} on \[k=|S_{\prec}^i|.\]

For $j=1, \dots, k$ we define $z_j=1$ if $z_j\in A\cup X_{\prec}^i\cup W_{\prec}$, and we define  $z_j=0$ in all other cases.
We choose $\overrightarrow x, \overrightarrow x'$ satisfying $\overrightarrow x\neq \overrightarrow x'$ and $ x_j=x'_j=z_j$ for all $z_j\not\in A$. Then there is $j'\not\in\{1,\dots,k\}$ such that $x_{j'}\neq x'_{j'}$.
Let \[\overrightarrow y=(y_{k+1}, \dots, y_q),\] where $q=|\var(B_i)|$, be the vector defined  by $y_j=1$ for all $y_j\in X_{\succeq}^i$, $y_j\in W_{\succeq}$. 
For $y_j\in Y$ we define $y_j=1$ if  it  is in the same column as $x_{j'}$ and $y_j=0$ otherwise.   We choose  $y_j=0$ in all other cases. 
Therefore, for each row there is an entry that is assigned to 1 and for each column  except $j'$ and columns from the set $\overline {Q^c}$  there is at most one entry assigned to 1. If a column $t$  is contained in the set $\overline {Q^c}$ then two entries in this column can be assigned to 1. By construction, for each column $t$ in the set $\overline {Q^c}$  there is a clause  $\lnot P_{s't}\vee \lnot P_{s''t}\not\in \cls(\B_i)$. Therefore, assigning  $P_{s't}$ and $\lnot P_{s''t}$ simultaniously to 1 does not violate the satisfiability of the subformula represented by  $\B_i$. 

Hence, the subset of clauses represented by $\B_i$ evaluates to $x_{j'}$ for the assignment $ (\overrightarrow x, \overrightarrow y)$ and to $x'_{j'}$ for the assignment  $ (\overrightarrow x', \overrightarrow y)$.


The size of the set $A$ is at least $n ( \frac{1}{2} - \frac{1}{4} \sqrt{2})/4$ by construction. Hence, by Lemma \ref{BddSize}, we conclude that $\size (\B_i)\geq 2^{|A|}\geq 2^{|R|/4}\geq 2^{n(\frac{1}{2} - \frac{1}{4} \sqrt{2})/4}$
for sufficiently large $n$. 

\end{itemize}

\end{proof}

\section{Conclusions}\label{conclusion}

This  paper improved an earlier result in which the use of
the OBDD proof system is restricted, in a way that the  proof must follow the structure of a given formula. 
We have shown that the OBDD proof system containing two rules,
{\it axiom} and  {\it join}, has lower bounds exponential in $n$  on
refutations for the pigeonhole formulas. On the other hand,  it has been shown in
\cite{CZ2009} that OBDD refutations of the same formulas can be given of 
polynomial size if the { \it projection} rule is  added to  the above two rules. 
Therefore, the result presented in this paper implies that the {\it projection}
rule is responsible for the gap between polynomial and exponential, just like 
the rule in extended resolution is responsible for a similar gap.

\bibliographystyle{eptcs} 

\bibliography{Tveretina}

\end{document}